\newcommand{\x}{\underline{x}}
\newcommand{\xstab}{x_{\text{s}}}
\newcommand{\JIT}{\ensuremath{\text{\tiny JIT}}} 
\newcommand{\DEC}{\ensuremath{\text{\tiny DEC}}} 
\definecolor{TODO}{rgb}{0.6,0.6,0.6} 
\definecolor{TOCHECK}{rgb}{0.8,0.8,0.8} 
\newtheorem{theorem}{Theorem}
\newcommand{\btheo}{\begin{theorem}}
\newcommand{\etheo}{\end{theorem}}
\newcommand{\bproof}{\begin{proof}}
\newcommand{\eproof}{\end{proof}}
\newtheorem{definition}[theorem]{Definition}
\newcommand{\bdefi}{\begin{definition}}
\newcommand{\edefi}{\end{definition}}
\newtheorem{fact}[theorem]{Fact}
\newcommand{\bprop}{\begin{fact}}
\newcommand{\eprop}{\end{fact}}
\newtheorem{corollary}[theorem]{Corollary}
\newcommand{\bcor}{\begin{corollary}}
\newcommand{\ecor}{\end{corollary}}
\newtheorem{example}[theorem]{Example}
\newcommand{\bex}{\begin{example}}
\newcommand{\eex}{\end{example}}
\newtheorem{lemma}[theorem]{Lemma}
\newcommand{\blemma}{\begin{lemma}}
\newcommand{\elemma}{\end{lemma}}
\newtheorem{remark}[theorem]{Remark}
\newcommand{\bremark}{\begin{remark}}
\newcommand{\eremark}{\end{remark}}
\newtheorem{conj}[theorem]{Conjecture}
\newcommand{\bconj}{\begin{conj}}
\newcommand{\econj}{\end{conj}}
\newcommand{\naturals}{\ensuremath{\mathbb{N}}}
\newcommand{\integers}{\ensuremath{\mathbb{Z}}}
\def\0{{\tt 0}} 
\def\1{{\tt 1}} 
\def\?{{\tt *}} 
\newcommand{\MAPsmall}{\ensuremath{\text{\tiny MAP}}} 
\newcommand{\BPsmall}{\ensuremath{\text{\tiny BP}}} 
\newcommand{\qed}{{\hfill \footnotesize $\blacksquare$}}
\newcommand{\BEC}{\ensuremath{\text{BEC}}}
\newcommand{\dr}{d_\mathrm{r}}
\newcommand{\dl}{d_\mathrm{l}}
\begin{document} 
\title{Threshold Saturation on Channels with Memory via Spatial Coupling }
\author{\authorblockN{ Shrinivas Kudekar\authorrefmark{1} and Kenta Kasai\authorrefmark{2} \\ }
\authorblockA{\authorrefmark{1} New Mexico Consortium  and Center for Non-linear Studies, Los Alamos National Laboratory, NM, USA\\
 Email: skudekar@lanl.gov} \\
\authorblockA{\authorrefmark{2} Dept. of Communications and Integrated Systems, Tokyo Institute of Technology, 152-8550 Tokyo, Japan.\\
Email: {kenta}@comm.ss.titech.ac.jp} \\
 }

\maketitle
\begin{abstract}
We consider spatially coupled code ensembles. A particular instance
are convolutional LDPC ensembles.  It was recently shown that, for
transmission over the memoryless binary erasure channel, this coupling increases
the belief propagation threshold of the ensemble to the maximum
a-posteriori threshold of the underlying component ensemble. This paved the way for
a new class of capacity achieving low-density parity check codes. It was also
shown empirically that the same threshold saturation occurs when we consider
transmission over general binary input memoryless channels.   

In this work, we report on empirical evidence which suggests that the same phenomenon also
occurs when transmission takes place over a class of channels with memory. 
This is confirmed both by simulations as well as by computing EXIT
curves.
\end{abstract}

\section{Introduction} It has long been known that convolutional LDPC (or
spatially coupled) ensembles, introduced by Felstr{\"{o}}m and Zigangirov
\cite{FeZ99}, have excellent thresholds when transmitting over general
binary-input memoryless symmetric-output (BMS) channels. The fundamental reason
underlying this good performance was recently discussed in detail in
\cite{KRU10} for the case when transmission takes place over the binary erasure
channel (BEC).

In particular, it was shown in \cite{KRU10} that the BP threshold of the
spatially coupled ensemble (see the last paragraph of this section for a
definition) is essentially equal to the MAP threshold of the underlying
component ensemble. It was also shown that for long chains the MAP performance
of the chain cannot be substantially larger than the MAP threshold of the
component ensemble.  In this sense, the BP threshold of the chain is increased
to its maximal possible value.  This is the reason why they call this phenomena
{\em threshold saturation via spatial coupling}. In a recent paper
\cite{LeF10}, Lentmaier and Fettweis independently formulated the same
statement as conjecture. They attribute the observation of the equality of the
two thresholds to G. Liva.  The phenomena of threshold saturation seems not to
be restricted to the BEC.  It was also shown recently in \cite{KMRU10} that the
same phenomena manifests itself when we consider transmission over more general
BMS channels. 
  
The principle which underlies the good performance of spatially coupled
ensembles is very broad. It has been shown to apply to many other problems in
communications, and more generally computer science. To mention just a few, the
threshold saturation effect (dynamical threshold of the system being equal to
the static or condensation threshold) of coupled graphical models has 
recently been shown to occur for compressed sensing
\cite{KP10}, and a variety of graphical models in statistical physics and
computer science like the so-called $K$-SAT problem, random graph coloring, or
the Curie-Weiss model \cite{HMU10}. Other communication scenarios where the
spatially coupled codes have found immediate application is to achieve the
whole rate-equivocation region of the BEC wiretap channel \cite{RUAS10}. 

It is tempting to conjecture that the same phenomenon occurs for
transmission over general channels with memory. We provide some empirical
evidence that this is indeed the case. In particular, we compute 
 EXIT curves for transmission over a class of channels with memory known as the Dicode Erasure Channel (DEC). 
 We show that these curves behave in an identical fashion to the ones when 
 transmission takes place over the memoryless BEC. We also
compute fixed points (FPs) of the spatial configuration and we demonstrate
again empirically that these FPs have properties
identical to the ones in the BEC case.

For a review on the literature on convolutional LDPC ensembles we refer the
reader to \cite{KRU10} and the references therein.  As discussed in
\cite{KRU10}, there are many basic variants of coupled ensembles.  For the sake
of convenience of the reader, we quickly review the ensemble $(\dl, \dr, L,
w)$. This is the ensemble we use throughout the paper as it is the simplest to
analyze.

\subsection{$(\dl, \dr, L, w)$ Ensemble \cite{KRU10}}

We assume that the variable nodes are at sections $[-L, L]$, $L
\in \naturals$. At each section there are $M$ variable nodes, $M
\in \naturals$. Conceptually we think of the check nodes to be
located at all integer positions from $[- \infty, \infty]$.  Only
some of these positions actually interact with the variable nodes.
At each position there are $\frac{\dl}{\dr} M$ check nodes. It
remains to describe how the connections are chosen.  We assume that
each of the $\dl$ connections of a variable node at position $i$
is uniformly and independently chosen from the range $[i, \dots,
i+w-1]$, where $w$ is a ``smoothing'' parameter. In the same way,
we assume that each of the $\dr$ connections of a check node at
position $i$ is independently chosen from the range $[i-w+1, \dots,
i]$.

A discussion on the above ensemble and a proof of the following lemma can be found in \cite{KRU10}.
\begin{lemma}[Design Rate]\label{lem:designrate}
The design rate of the ensemble $(\dl, \dr, L, w)$, with $w \leq 2 L$,
is given by
\begin{align*}
R(\dl, \dr, L, w) & = 
(1-\frac{\dl}{\dr}) - \frac{\dl}{\dr} \frac{w+1-2\sum_{i=0}^{w} 
\bigl(\frac{i}{w}\bigr)^{\dr}}{2 L+1}.
\end{align*}
\end{lemma}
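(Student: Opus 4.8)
The plan is to compute the rate directly as $R = 1 - (\text{number of check nodes})/(\text{number of variable nodes})$ by reading both counts off the construction. There are exactly $(2L+1)M$ variable nodes, so the denominator is fixed. For the numerator, observe that a variable node in section $i\in[-L,L]$ sends all $\dl$ of its edges to check positions in $\{i,\dots,i+w-1\}$, so the check positions that can carry an edge are precisely $\{-L,\dots,L+w-1\}$, a total of $2L+w$ positions, each holding $\frac{\dl}{\dr}M$ check nodes. A naive count gives $(2L+w)\frac{\dl}{\dr}M$ check nodes, but near each boundary some of these have all $\dr$ of their sockets landing on sections outside $[-L,L]$; such a check node has degree zero, imposes no constraint, and must be discarded before forming the design rate. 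Counting these degree-zero check nodes is the whole content of the lemma.

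First I would fix the boundary bookkeeping, using the hypothesis $w\le 2L$ to guarantee that no check position lies within distance $w-1$ of both ends at once, so each reduced-degree check position ``sees'' at most one boundary. Consider a check node at position $-L+k$ with $0\le k\le w-1$: its $\dr$ sockets are each chosen uniformly among the $w$ positions $\{-L+k-w+1,\dots,-L+k\}$, of which exactly $k+1$ lie in $[-L,L]$; hence a given socket misses the code with probability $\frac{w-k-1}{w}$, and the check is empty with probability $\bigl(\tfrac{w-k-1}{w}\bigr)^{\dr}$, contributing $\tfrac{\dl}{\dr}M\bigl(\tfrac{w-k-1}{w}\bigr)^{\dr}$ empty check nodes in expectation. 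Summing over $k=0,\dots,w-1$ and substituting $i=w-1-k$ gives $\tfrac{\dl}{\dr}M\sum_{i=0}^{w-1}(i/w)^{\dr}$ empty checks on the left, and the palindromic (left--right symmetric) structure of the missing-connection counts yields the same on the right. Using $\sum_{i=0}^{w-1}(i/w)^{\dr}=\sum_{i=0}^{w}(i/w)^{\dr}-1$, the total number of degree-zero checks is $\tfrac{\dl}{\dr}M\bigl(2\sum_{i=0}^{w}(i/w)^{\dr}-2\bigr)$, so the number of nontrivial check nodes is $\tfrac{\dl}{\dr}M\bigl(2L+w+2-2\sum_{i=0}^{w}(i/w)^{\dr}\bigr)$.

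Dividing by $(2L+1)M$ and rewriting $2L+w+2-2\sum_{i=0}^{w}(i/w)^{\dr}$ as $(2L+1)+\bigl(w+1-2\sum_{i=0}^{w}(i/w)^{\dr}\bigr)$ immediately produces the stated formula. The only genuine subtlety --- and the step I expect to need the most care --- is that the number of degree-zero checks at a boundary position is binomially distributed rather than deterministic, so the identity above is exact only in expectation; I would handle this either by taking the design rate of the ensemble to be defined through $\mathbb{E}[\text{number of nontrivial checks}]$, or by noting that this count concentrates as $M\to\infty$ so that a typical member of the ensemble has the claimed rate. I would also double-check the two extreme terms of the sum, namely $i=0$ (which vanishes) and $i=w$ (which equals $1$ and is exactly what produces the $-1$ used above), since the clean form of the answer depends on them.
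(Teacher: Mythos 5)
Your counting argument is correct and is essentially the proof the paper relies on (it defers to \cite{KRU10}, where the design rate is obtained by exactly this computation: $(2L+1)M$ variable nodes, check positions $[-L,L+w-1]$, and subtraction of the expected number of boundary check nodes whose $\dr$ sockets all fall outside $[-L,L]$, with the design rate read off from the expected count). Your boundary bookkeeping, the use of $w\le 2L$ so each boundary check sees only one end, and the reindexing that produces $w+1-2\sum_{i=0}^{w}(i/w)^{\dr}$ all check out.
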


In the next section we provide the channel model and the joint iterative
decoder. We also present the density evolution analysis of the joint iterative
decoder when we consider $(\dl, \dr)$-regular LDPC ensembles. In the section on main results, 
we demonstrate the threshold saturation phenomena by using spatially coupled
codes. 

\section{Channels with Memory: The Dicode Erasure Channel}

The particular class of channel with memory that we consider is the Dicode
Erasure Channel (DEC). The DEC is a binary-input channel defined as follows. The
output of a binary-input linear filter $(1-D)$ ($D$ is the delay element) is
erased with probability $\epsilon$ and transmitted perfectly with probability
$1-\epsilon$. For this channel we will be interested in the symmetric
information rate (SIR), i.e., the capacity assuming i.i.d Bern(1/2) signalling. 
 In this case, the Shannon threshold for a given rate $r$ is given by
 $\frac{1-r}4 + \frac14\sqrt{(1-r)^2 + 8(1-r)}$.
The details on the definition of the channel and the analytical formula for the SIR 
 can be found in the thesis of Pfister \cite{Pfi03} and in \cite{PfS08}. 

\subsection{Joint Iterative Decoder, Density Evolution and the Extended BP Fixed Points} 

We use the joint iterative decoder (JIT) of Pfister and Siegel \cite{PfS08}. More
precisely, we consider a turbo equalization system, which performs one channel
iteration (BCJR step) for each iteration over the LDPC code. As a result, in
every iteration, first the channel detector uses the extrinsic information
provided by the LDPC code to compute its extrinsic erasure fraction. This is
then fed to the LDPC decoder which then again computes the usual variable node
and check node erasure messages. 

The simplicity of the DEC gives an analytical formula for the erasure fraction
of the message which is passed from the channel detector to the LDPC code (see
\cite{PfS08} for a derivation). This is given by $$f(x) = \frac{4\epsilon^2}{(2
- x(1-\epsilon))^2},$$ where $x$ represents the fraction of erasures entering
the channel detector from the LDPC code. $f(.)$ represents the extrinsic
erasure information provided by the channel detector. 

To summarize: the density evolution\footnote{See \cite{PfS08} for a rigorous
justification of the density evolution analysis.} (DE) equation for the
 case of $(\dl, \dr)$-regular LDPC ensemble is given  by

\begin{align*}
x = f((1-(1-x)^{\dr-1})^{\dl}) (1 - (1 - x)^{\dr-1})^{\dl-1}.
\end{align*}
Note that the term inside the brackets in $f(.)$ represents the probability that
a variable node is in erasure as given by the LDPC code. Also it is not hard to see that $f(x)\leq 1$ for any $x$.   

\begin{example} Consider JIT decoding of the DEC with
$(5,15)$-regular LDPC ensemble. The design rate of this code is $2/3$. 
 Using the SIR formula ($=1 - 2\epsilon^2/(1+\epsilon)$) from \cite{PfS08} we
 get that the Shannon threshold at rate=2/3 is given by
 $\epsilon^{\text{Sh}}_{\DEC}=0.5$. Figure~\ref{fig:5-15-BP} shows the performance of
 the JIT decoder. We see that the threshold is given by
 $\epsilon^{\JIT}_{\DEC}(5,15)\approx 0.363471$, which is far away from the capacity.
 Throughout the paper we will use $\epsilon^{\JIT}_{\DEC}(\dl, \dr)$ to denote the
 threshold of the JIT decoder when we use $(\dl, \dr)$-regular LDPC
 ensemble and transmit over the DEC. 
\begin{figure}[htp] \centering
\input{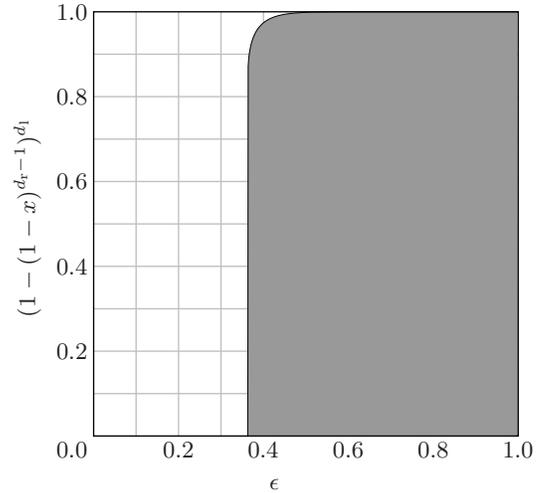} 
\caption{\label{fig:5-15-BP} 
The BP curve for the $(\dl=5,
\dr=15)$-regular ensemble and transmission over the DEC. The threshold of the
JIT decoder is given by $\epsilon^{\JIT}_{\DEC}(5,15)\approx 0.363471$.} 
\end{figure}
\end{example}

\subsubsection*{The EXIT curve} The EXIT curve\footnote{To be very precise, we should call the curves we plot as EXIT-like curves. The reason being that we do not provide any operation interpretation of these curves, like the Area theorem \cite{RiU08} in this work. The curves serve only to illustrate the capacity achieving nature of coupled-codes.} plots all the fixed-points of
the DE equation. The curve is given by the parametric curve $\{
(1-(1-x)^{\dr-1})^{\dl}, \epsilon(x)\}$. We obtain $\epsilon(x)$ by solving for
$\epsilon$ in the DE equation. 

As an example, we plot the EXIT curve for various $(\dl, \dr)$-regular LDPC
ensembles as shown in Figure~\ref{fig:5-15-EBP}. The JIT
threshold is got by dropping a vertical line from the leftmost point on any
given curve. We note that for every $\epsilon > \epsilon^{\JIT}_{\DEC}(\dl, \dr)$,
there are exactly 3 fixed-points. One of them being the trivial 0 fixed-point.
This ``C'' shape of the EXIT curve is also what we observe when we transmit
through a memoryless BEC using $(\dl, \dr)$-regular LDPC ensemble. Also we
remark that as the degrees increase, keeping the design rate fixed, the
 JIT threshold keeps on decreasing. This is also the case for
transmission over memoryless BEC. In fact, for memoryless BEC case, the BP
threshold goes to zero as we increase the degrees. 
\begin{figure}[htp] \centering
\input{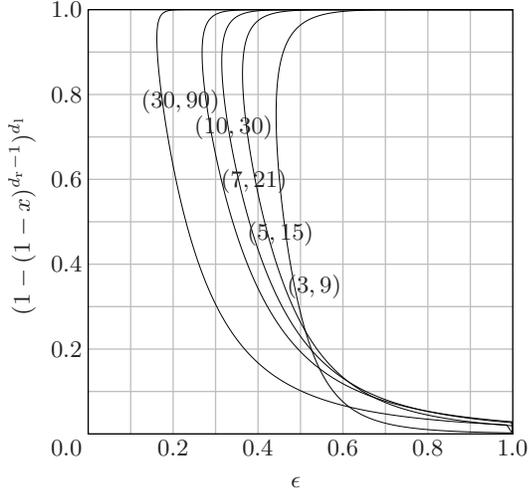} 
\caption{\label{fig:5-15-EBP} 
The EXIT curve for regular LDPC ensembles with $(\dl, \dr)$ given by  $(3,9)$, $(5,15)$, $(7,21)$, $(10,30)$, $(30,90)$, and transmission over the
DEC. We observe that the JIT threshold moves to the left and
eventually will go to zero as degrees go to infinity.} 
\end{figure}
 We can also show the same result for the DEC. More precisely, we have
 \begin{lemma}[JIT Threshold Goes to Zero]\label{lem:BPgoesto0}
 For any $(\dl, \dr)$-regular ensemble we have
 \begin{align*}
 \epsilon^{\JIT}_{\DEC}(\dl, \dr) \leq \sqrt{\frac{1}{\sqrt{\dr-1}(1 - (\dl - 1)e^{-\sqrt{\dr-1}})}}.
 \end{align*}
 \end{lemma}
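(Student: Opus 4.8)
The plan is to use monotonicity of the joint density-evolution (DE) recursion to reduce the claim to exhibiting a single non-trivial DE fixed point. Put $g(x)=1-(1-x)^{\dr-1}$ and let $F_\epsilon(x)=f\bigl(g(x)^{\dl}\bigr)\,g(x)^{\dl-1}$ be the DE map of the JIT decoder, with $f(z)=\tfrac{4\epsilon^2}{(2-z(1-\epsilon))^2}$ as in the text. For $\epsilon\in[0,1]$ the maps $g$, $z\mapsto z^{\dl-1}$, $z\mapsto z^{\dl}$ and $f$ (on $z\in[0,1]$, where $2-z(1-\epsilon)>0$) are all non-decreasing, so $F_\epsilon$ is non-decreasing and continuous on $[0,1]$; since the decoder starts from $x_0=1$ and $F_\epsilon(1)=f(1)\le1$, the DE iterates decrease to the largest fixed point of $F_\epsilon$ in $[0,1]$. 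Hence, if for some $x^\ast\in(0,1]$ we have $F_\epsilon(x^\ast)\ge x^\ast$, then (comparing $F_\epsilon(x^\ast)-x^\ast\ge0$ with $F_\epsilon(1)-1\le0$ and invoking the intermediate value theorem) $F_\epsilon$ has a fixed point in $[x^\ast,1]$, so decoding fails and $\epsilon^{\JIT}_{\DEC}(\dl,\dr)\le\epsilon$. It therefore suffices to find, with $\epsilon$ set to the claimed bound $B$, a point $x^\ast\in(0,1]$ such that $F_B(x^\ast)\ge x^\ast$.

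First I would replace $f$ by a channel-free lower bound: since $f$ is non-decreasing in $z$, $f(z)\ge f(0)=\epsilon^2$ for every $z\ge0$, hence $F_\epsilon(x)\ge\epsilon^2\,g(x)^{\dl-1}$. This decouples the channel detector's nonlinearity from $\epsilon$ and reduces the problem to the familiar ``BP threshold goes to zero'' estimate for the memoryless $\BEC$ --- bounding $\inf_{x\in(0,1]}x/g(x)^{\dl-1}$ from above --- except that the bound now appears under a square root.

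I would then test the point $x^\ast=1/\sqrt{\dr-1}$, which lies in $(0,1]$ for $\dr\ge2$. From $(1-t)^{1/t}\le e^{-1}$ with $t=1/\sqrt{\dr-1}$ we get $(1-x^\ast)^{\dr-1}=\bigl((1-x^\ast)^{\sqrt{\dr-1}}\bigr)^{\sqrt{\dr-1}}\le e^{-\sqrt{\dr-1}}$, so $g(x^\ast)\ge1-e^{-\sqrt{\dr-1}}$, and Bernoulli's inequality gives $g(x^\ast)^{\dl-1}\ge1-(\dl-1)e^{-\sqrt{\dr-1}}$. Combining, $F_\epsilon(x^\ast)\ge\epsilon^2\bigl(1-(\dl-1)e^{-\sqrt{\dr-1}}\bigr)$; substituting $\epsilon=B$, for which $B^2=\bigl(\sqrt{\dr-1}\,(1-(\dl-1)e^{-\sqrt{\dr-1}})\bigr)^{-1}$, gives $F_B(x^\ast)\ge1/\sqrt{\dr-1}=x^\ast$, as required. (If $1-(\dl-1)e^{-\sqrt{\dr-1}}\le0$ the right-hand side of the lemma is not a real number and there is nothing to prove; if $B\ge1$ the bound is trivial since an erasure parameter never exceeds $1$. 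So one may assume $0<1-(\dl-1)e^{-\sqrt{\dr-1}}$ and $B<1$.)

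I do not expect a real obstacle: once $x^\ast=1/\sqrt{\dr-1}$ is chosen --- exactly the test point that works for the $\BEC$ --- the computation is short. The only point worth checking is that the crude bound $f\ge\epsilon^2$, which wastes up to a factor $4$ relative to the value of $f$ near the relevant fixed point, is still enough; it is, because the resulting bound $B$ tends to $0$ as $\dr\to\infty$ whenever $\dl-1=o(e^{\sqrt{\dr-1}})$, so the JIT threshold does vanish as the degrees grow, just as over the $\BEC$.
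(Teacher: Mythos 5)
Your proposal is correct and follows essentially the same route as the paper's proof: bound the channel transfer function below by $f\ge\epsilon^2$, exhibit a nonzero DE fixed point via the test point $x^\ast=1/\sqrt{\dr-1}$ using $(1-x^\ast)^{\dr-1}\le e^{-\sqrt{\dr-1}}$ and Bernoulli's inequality, and conclude that the decoder gets stuck. The only difference is cosmetic --- you phrase it as failure at $\epsilon=B$ via monotone convergence to the largest fixed point, while the paper states the contrapositive necessary condition $\epsilon^2\bigl(1-(1-x)^{\dr-1}\bigr)^{\dl-1}<x$ for all $x\in(0,1]$ --- and your write-up is, if anything, slightly more careful about the fixed-point argument.
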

\begin{proof}
We claim that the
necessary condition for the JIT decoder to succeed is given by 
$$
\epsilon^2 (1 - (1-x)^{\dr-1})^{\dl-1} < x,
$$
for all $x\in (0,1]$. Indeed, suppose on the contrary that there exists a $c\in
(0,1]$ such that the above inequality is violated. Thus we have 
$
\epsilon^2 (1 - (1-c)^{\dr-1})^{\dl-1} \geq c.
$
Since $f(x)\geq \epsilon^2$ for all $x\in [0,1]$ we get
$$
f(c) (1 - (1-c)^{\dr-1})^{\dl-1} \geq c.
$$
This implies that there exists a FP of DE for the DEC for some value in $[c,1]$.
 It is not hard to see that this implies the JIT decoder will get stuck at this
 FP, resulting in unsuccessful decoding. 

Thus we must have that for all $x\in (0,1]$
$$
\epsilon^2 (1 - (1-x)^{\dr-1})^{\dl-1} < x. 
$$
For the choice of $x=\frac1{\sqrt{\dr-1}}$ we get the statement of the lemma.
 To see this computation first write $(1-x)^{\dr-1}$ as $e^{(\dr-1)\log(1-x)}$.
 Then use $\log(1-x) \leq -x$ and $x=\frac1{\sqrt{\dr-1}}$ to get 
$(1-x)^{\dr-1} \leq e^{-\sqrt{\dr-1}}$. After this use 
\begin{align*}
(1 - e^{-\sqrt{\dr-1}})^{\dl-1} & = 1 - (1 -(1 - e^{-\sqrt{\dr-1}})^{\dl-1}) \\ 
& \geq 1 - (\dl-1)e^{-\sqrt{\dr-1}},
\end{align*}
to complete the argument. 
\end{proof}
As a consequence of Lemma~\ref{lem:BPgoesto0} we get that,
with the ratio $\dl/\dr$ kept fixed, $\lim_{\dl \to
\infty}\epsilon^{\JIT}_{\DEC}(\dl, \dr) = 0$.

\section{Main Results} In this section we show, empirically, that
spatially coupled-codes achieve the Shannon capacity of the DEC.  We recall
that we are consider SIR which is give by the formula SIR$=1 -
2\epsilon^2/(1+\epsilon)$. For the sake of exposition, we demonstrate our
results only for rate equals $2/3$. The Shannon threshold for this rate is
given by $\epsilon^{\text{Sh}}_{\DEC}=0.5$. For other rates similar results can be
observed.  From the preceding section we see that standard $(\dl,
\dr)$-regular LDPC ensembles do not saturate the JIT threshold (to the
Shannon threshold).

We begin by writing down the DE equation for the coupled-codes. 

\subsection{Density Evolution}

Consider the $(\dl, \dr, L, w)$ ensemble. Recall that there are $2L+1$ sections
of variable nodes.  Each section has $M$ variable nodes. We transmit variable
nodes sectionwise over the DEC. More precisely, the variable nodes in section
$-L$ are transmitted first, followed by variable nodes in section $-L+1$ and so
on so forth till we finally transmit all the variable node in section $L$. As a
consequence we have a channel detected factor graph sitting on top of each
section of the coupled-code. 

To perform the DE analysis, we  already take the limit $M\to \infty$. As a
result of this limit, one can ignore the boundary effects of the channel
detector and treat the channel detectors as disconnected\footnote{Another way to
think about this is to imagine that we transmit a known sequence of bits of
length equal to the memory of the channel after we transmit all the variable nodes in
each section. Since the channel memory is finite, this induces a rate loss going
to zero as $M\to \infty$. Now the known sequence is the initial state for each
of the channel detectors and hence we can consider them disconnected.}. 

Let $x_i$, $i\in \integers$, denote the average erasure probability which
is emitted by variable nodes at position $i$. For $i \not \in [-L, L]$
we set $x_i=0$.
For $i \in [-L, L]$ the DE is given by
\begin{align}\label{eq:densevolxi}
x_i 
& = \epsilon_i \Bigl(1-\frac{1}{w} \sum_{j=0}^{w-1} \bigl(1-\frac{1}{w}
\sum_{k=0}^{w-1} x_{i+j-k} \bigr)^{\dr-1} \Bigr)^{\dl-1},
\end{align}
where $\epsilon_i$ is given by
\begin{align}\label{eq:channelEXIT}
\epsilon_i = f\Big(\Bigl(1-\frac{1}{w} \sum_{j=0}^{w-1} \bigl(1-\frac{1}{w}
\sum_{k=0}^{w-1} x_{i+j-k} \bigr)^{\dr-1} \Bigr)^{\dl}\Big),
\end{align}
where recall that $f(\cdot)$ is the channel extrinsic transfer function. 
 We will use the notation $\epsilon^{\JIT}_{\DEC}(\dl, \dr, L, w)$ to denote the
 threshold of the JIT decoder when we use the $(\dl, \dr, L, w)$
 ensemble for transmission. As a shorthand we use
 $
g(x_{i-w+1},\dots,x_{i+w-1})$ to denote $\Bigl(1-\frac{1}{w} \sum_{j=0}^{w-1} \bigl(1-\frac{1}{w}
\sum_{k=0}^{w-1} x_{i+j-k} \bigr)^{\dr-1} \Bigr)^{\dl-1}.$

\begin{definition}[FPs of Density Evolution]\label{def:fixedpoints}
Consider DE for the $(\dl, \dr, L, w)$ ensemble.
Let $\x=(x_{-L}, \dots, {x}_{L})$. We call $\x$ the {\em constellation}. We say that $\x$ forms a FP
of DE with channel $\epsilon$ if $\x$ fulfills (\ref{eq:densevolxi})
for $i \in [-L, L]$.  As a shorthand we then say that $(\epsilon, \x)$
is a FP.  We say that $(\epsilon, \x)$ is a {\em non-trivial}
FP if $\x$ is not identically equal to $0\,\forall\, i$.
Again, for $i\notin [-L,L]$, $x_i = 0$.
\qed
\end{definition}

\begin{definition}[Forward DE and Admissible Schedules]\label{def:forwardDE} 
Consider {\em forward} DE for the $(\dl, \dr, L, w)$ ensemble.  More
precisely, pick a channel $\epsilon$. Initialize 
$\x^{(0)}=(1, \dots, 1)$. Let $\x^{(\ell)}$ be the result of
$\ell$ rounds of DE. More precisely, $\x^{(\ell+1)}$ is generated from
$\x^{(\ell)}$ by applying the DE equation \eqref{eq:densevolxi} to each
section $i\in [-L,L]$,
\begin{align*}
x_i^{(\ell+1)} & = \epsilon_i g(x_{i-w+1}^{(\ell)},\dots,x_{i+w-1}^{(\ell)}).
\end{align*}
We call this the {\em parallel} schedule.
 The important difference with the memoryless BEC case is that the channel
 $\epsilon_i$ is not fixed for the DEC and decreases with increasing iterations according to
 \eqref{eq:channelEXIT}.

More generally, consider a schedule in which in each step $\ell$
an arbitrary subset of the sections is updated, constrained only by
the fact that every section is updated in infinitely many steps. We
call such a schedule {\em admissible}. Again, we call $\x^{(\ell)}$
the resulting sequence of constellations.  \qed 
\end{definition}
One can show that if we perform forward DE under any admissible schedule, then
the constellation $\x^{(\ell)}$ converges to a FP of DE and this FP is
independent of schedule. This statement can be proved similar to the one in
\cite{KRU10}.

\subsection{Forward DE -- Simulation Results}

We consider forward DE for the $(\dl, \dr, L, w)$ ensemble. More precisely, we
fix an $\epsilon$ and initialize all $x_i$ for $i \in [-L,L]$ to 1. Then we run
the DE given by \eqref{eq:densevolxi} till we reach a fixed-point. We fix
$L=250$. For $\dl=3$ and $\dr=9$, we have that $\epsilon^{\JIT}_{\DEC}(3, 9, 300, 3)
\approx 0.49815$. If we increase the degrees we get $\epsilon^{\JIT}_{\DEC}(5, 15,
300, 5) \approx 0.49995$, $\epsilon^{\JIT}_{\DEC}(7, 21, 300, 7) \approx 0.499989$ and
$\epsilon^{\JIT}_{\DEC}(9, 27, 300, 9) \approx 0.499996$. We observe that for
increasing the degrees the threshold approaches the Shannon threshold of $0.5$.

\subsection{The EXIT Curve for Coupled Ensembles}

We now come to the key point of the paper, the computation of the EXIT curve. 
 Before we do this, we define the entropy of a constellation $\x = (x_{-L},\dots,x_{L})$ as
 \begin{align*}
 \chi = \frac1{2L+1}\sum_{i=-L}^{L} x_i.
 \end{align*}
 
To plot the EXIT curve we first fix $\chi\in [0,1]$ and then run DE such
that the resulting FP constellation has entropy equal to $\chi$. This is the
reverse DE procedure as described in \cite{MMRU09}. We remark that $f(x)$ is an
increasing function of $\epsilon$, hence in the reverse DE procedure one can
easily find an appropriate $\epsilon$ by the bisection method. 

Figure~\ref{fig:EBPEXIT_coupled} shows the plot of the EXIT curve for the
$(5,15,L,5)$ ensemble with $L=2, 4, 8, 16, 32, 64, 128, 256, 512$. We see that the
curves look very similar to the curves when transmitting over a BMS channel. For
very small values of $L$, the curves are far to the right due to significant
rate loss that is incurred at the boundary. As $L$ increases the rate loss
diminishes and the JIT threshold is very close to the
Shannon threshold. This picture strongly suggests that the same threshold
saturation effect ($\epsilon^{\JIT}_{\DEC}(\dl, \dr, L, w) \approx \epsilon^{\MAPsmall}_{\DEC}(\dl,
\dr, L, w)$) also occurs for the DEC as it was shown analytically in
\cite{KRU10}. 

\begin{figure}[htp] \centering
\input{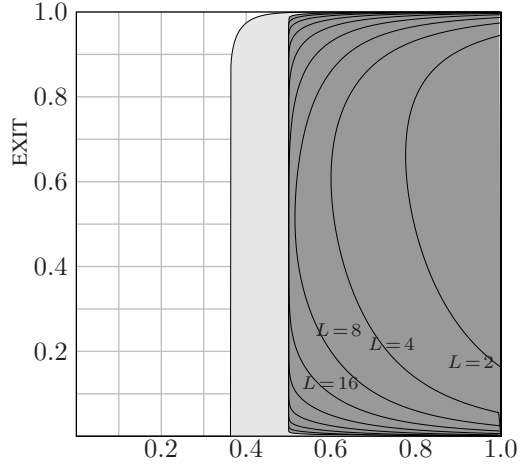} 
\caption{\label{fig:EBPEXIT_coupled} 
The EXIT curve for the $(\dl=5,
\dr=15, L, 5)$ ensemble and transmission over the DEC for $L=2,4,8,16,32,64,128,256,512$. 
The curves keep moving to the left as $L$ increases similar to the curves when
transmitting over BMS. The ``vertical'' drop in the EXIT curves occurs at
$\approx 0.5$ for $L\geq 32$. Also shown in light gray is the BP exit curve for
 the uncoupled $(5,15)$-regular ensemble.} 
\end{figure}

\subsection{Shape of Fixed Point of Density Evolution}\label{sec:propertiesofFP}
We plot the constellation representing the unstable FP of DE. This FP cannot be
reached via forward DE and is obtained via reverse DE procedure. We recall 
 that this FP played a key role in proving the threshold saturation phenomena
 when transmitting over the BEC. Let us describe the (empirically observed) crucial properties of this
 constellation. 
 \begin{itemize}
 \item[(i)] The constellation is symmetric around $i=0$ and is unimodal. The
 constellation has $\epsilon\approx 0.49995$. 
 \item[(ii)] Let $\xstab(\epsilon)$ denote a stable FP of DE. The value in the
 flat part in the middle is $\approx 0.4434$ which is very close to the stable
 FP of DE for the underlying uncoupled $(5,15)$-regular ensemble.  
 \item[(iii)] The transition from close to zero to close to $\xstab(\epsilon)$
 is very quick.
 \end{itemize}

\begin{figure}[htp] \centering
\input{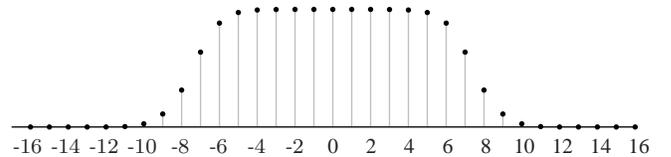} 
\caption{\label{fig:FPconstellation} 
The constellation representing FP of DE for $(5,15,33,5)$ ensemble and entropy
fixed to $\chi=0.2$. This is an unstable FP constellation. The
constellation is very similar to any unstable FP constellation when transmitting
over memoryless BEC. The constellation is unimodal. There is a long tail of zeros
followed by a sharp transition and then a long flat part with values close to
$\xstab(\epsilon)$. The constellation has $\epsilon\approx 0.49995$.} 
\end{figure}

\section{A Possible Proof Approach}
Till now we gave empirical evidence of the threshold saturation phenomena when
transmitting over the DEC using coupled-codes.   
Before we proceed to give the proof idea for the threshold saturation, we first show that coupling indeed helps. 
 More precisely we have the following lemma,
\begin{lemma}[Spatial Coupling Helps] \label{lem:BPofcoupled}
For $\dl, \dr \to \infty$ with the ratio $\dl/\dr$ kept fixed, we have
\begin{align*}
\epsilon^{\JIT}_{\DEC}(\dl, \dr, L, w) \geq \frac{\dl}{\dr}.
\end{align*}
\end{lemma}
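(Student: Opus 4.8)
The plan is to reduce the claim to the already‑understood case of the memoryless BEC, by exploiting a monotonicity between the two density evolutions. The key elementary observation is that the effective channel parameter $\epsilon_i$ fed to the LDPC graph in \eqref{eq:channelEXIT} is never larger than $\epsilon$: the map $f$ is non‑decreasing on $[0,1]$ (its denominator $2-t(1-\epsilon)$ is non‑increasing in $t$ and stays positive), its argument lies in $[0,1]$, and $f(1)=4\epsilon^{2}/(1+\epsilon)^{2}\le\epsilon$, the last inequality being equivalent to $(1-\epsilon)^{2}\ge 0$. Since $g(\cdot)$ is non‑decreasing in each of its arguments, a one‑line induction on the iteration index then shows that forward DE for the $(\dl,\dr,L,w)$ ensemble over the DEC, started from the all‑one constellation, is dominated coordinatewise \emph{at every iteration} by forward DE for the $(\dl,\dr,L,w)$ ensemble over the memoryless BEC with erasure probability $\epsilon$ --- that is, the recursion \eqref{eq:densevolxi} with every $\epsilon_i$ replaced by $\epsilon$. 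Hence, whenever coupled DE over the BEC$(\epsilon)$ converges to the zero constellation, so does coupled DE over the DEC$(\epsilon)$, and therefore
\[
\epsilon^{\JIT}_{\DEC}(\dl,\dr,L,w)\;\ge\;\epsilon^{\BPsmall}_{\BECsmall}(\dl,\dr,L,w).
\]

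It remains to lower bound the coupled BP threshold on the BEC. Here I would invoke \cite{KRU10}: as $L$ and $w$ grow, $\epsilon^{\BPsmall}_{\BECsmall}(\dl,\dr,L,w)$ tends to the MAP threshold $\epsilon^{\MAPsmall}_{\BECsmall}(\dl,\dr)$ of the underlying uncoupled $(\dl,\dr)$‑regular ensemble (threshold saturation on the BEC). Moreover, Lemma~\ref{lem:designrate} together with the trivial bound $\sum_{i=0}^{w}(i/w)^{\dr}\le\sum_{i=0}^{w}(i/w)=(w+1)/2$ (valid since $0\le i/w\le 1$ and $\dr\ge 1$) shows $1-R(\dl,\dr,L,w)\ge\dl/\dr$, i.e.\ the BEC capacity of the ensemble is at least $\dl/\dr$; and it is classical \cite{RiU08} that for the uncoupled $(\dl,\dr)$‑regular ensemble $\epsilon^{\MAPsmall}_{\BECsmall}(\dl,\dr)\to 1-(1-\dl/\dr)=\dl/\dr$ as $\dl,\dr\to\infty$ with $\dl/\dr$ fixed. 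Chaining these with the displayed inequality yields $\liminf\epsilon^{\JIT}_{\DEC}(\dl,\dr,L,w)\ge\dl/\dr$ in the regime of the lemma.

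The comparison step is essentially free, so the burden lies entirely in the imported facts and in the bookkeeping around them, which is where I expect the real work. First, one must be careful with the order of limits: the statement of \cite{KRU10} naturally sends $L\to\infty$ and then $w\to\infty$, so the clean bound $\dl/\dr$ holds in the limit of large coupling parameters (for fixed $w$ one only gets $\dl/\dr$ up to an additive term vanishing as $w\to\infty$). Second, the input ``uncoupled regular MAP threshold tends to capacity'' is itself non‑trivial, resting on a weight‑enumerator / second‑moment argument rather than on density evolution. A self‑contained alternative would be to replay a KRU10‑style potential‑function argument \emph{directly} on \eqref{eq:densevolxi}--\eqref{eq:channelEXIT}; this is harder, because $\epsilon$ enters the DEC recursion nonlinearly through $f(\cdot)$, so the single‑system potential loses the convenient closed form it has for the BEC and the monotonicity/sign estimates have to be redone.
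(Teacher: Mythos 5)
Your first step is exactly the paper's proof: the paper also observes $\epsilon_i \le f(1) = 4\epsilon^2/(1+\epsilon)^2 \le \epsilon$ and concludes $x_i \le \epsilon\, g(x_{i-w+1},\dots,x_{i+w-1})$, thereby dominating the DEC recursion by coupled DE for the memoryless BEC with parameter $\epsilon$. Where you part ways is in how the coupled BEC threshold is bounded from below. The paper invokes a single result (Theorem~10 of \cite{KRU10}) asserting that $\epsilon^{\BPsmall}_{\BECsmall}(\dl,\dr,L,w)\to \dl/\dr$ in the large-degree limit with $\dl/\dr$ fixed, a statement that applies at the fixed $(L,w)$ appearing in the lemma; you instead chain full threshold saturation ($\epsilon^{\BPsmall}_{\BECsmall}(\dl,\dr,L,w)\to\epsilon^{\MAPsmall}_{\BECsmall}(\dl,\dr)$ as $L\to\infty$, then $w\to\infty$) with the classical fact that the uncoupled regular MAP threshold tends to $\dl/\dr$. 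That chain is sound, but, as you yourself flag, it does not quite prove the lemma as stated: for the fixed $w$ (and $L$) of the statement it only yields $\dl/\dr$ up to a correction vanishing with $w$, and to descend from the $L\to\infty$ limit to finite $L$ you additionally need the (true, but unstated in your argument) monotonicity of the coupled BP threshold in $L$. It also imports much heavier machinery --- the Maxwell-construction/counting arguments underlying the MAP-threshold results --- where the paper's cited Theorem~10 is a direct density-evolution statement. So: the essential idea (reduce DEC to BEC via $f(1)\le\epsilon$ and monotonicity) matches the paper; to close the remaining quantifier gap you should replace the saturation-plus-MAP chain by the large-degree coupled-BP-threshold result for fixed $(L,w)$, which is precisely what the paper does.
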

\begin{proof}
Since $\epsilon_i$ is an increasing function of  $x_{i-w+1},\dots,x_{i+w-1}$, we have
 $\epsilon_i \leq f(1) \leq \frac{4\epsilon^2}{(1+\epsilon)^2} \leq \epsilon.$
Combining this with the DE equation for the coupled-codes, we get
\begin{align*}
x_i \leq \epsilon g(x_{i-w+1}, \dots, x_{i+w-1}),
\end{align*}
for all $i\in [-L,L]$. 
 But we know from Theorem 10 in \cite{KRU10} that
 $\lim_{\dl\to \infty}\epsilon^{\BPsmall}_{\BEC}(\dl, \dr, L, w)\to \frac{\dl}{\dr}$. Thus for
 $\epsilon<\frac{\dl}{\dr}$ the right-hand-side of the above inequality goes to
 zero. 
 Hence the lemma. 
\end{proof}

As an example, consider the $(\dl, \dr)$-regular ensemble with
$\dl/\dr=1/3$ (rate equal to $2/3$) . For $L\to \infty$, the rate of the $(\dl,
\dr, L, w)$ goes to $2/3$. From Lemma~\ref{lem:BPgoesto0} we have that
$\epsilon^{\JIT}_{\DEC}(\dl, \dr) \to 0$ and from Lemma~\ref{lem:BPofcoupled} 
 we have that $\epsilon^{\JIT}_{\DEC}(\dl, \dr, L, w) \geq
 \frac{\dl}{\dr}=\frac13$. Thus spatial coupling indeed boosts the JIT
 threshold. 
 However the empirical evidence suggests that the boost is all the
 way up to the Shannon threshold (which is $0.5$ in this case). Since there is ample similarity
 between the DEC and the BEC,
 the guideline for a proof is similar to when we are
 transmitting over the BEC. \\
{\em (i) Existence of FP:} A key ingredient in proving the result for the BEC
was to show the existence of a special FP of DE $(\x, \epsilon^*)$. In principle, the BEC proof
should extend. The only difference is that instead of a constant channel
$\epsilon$, we have a channel value which depends on the FP constellation
itself. However, since the functions involved are rational, this should not 
 be a big hurdle.\\
{\em (ii) Shape of the constellation and the transition length:} 
The next task is to show that the FP guaranteed by the above theorem has the
properties as given in Section~\ref{sec:propertiesofFP}. Proving this would
 first involve showing that the underlying regular ensemble has a ``C'' shaped 
 EXIT curve. Intuitively, this means that the FP constellation (of the
 coupled-code) can only hover around the stable FPs of DE (of the underlying
 regular ensemble), implying that it
 has either a large tail of zeros or a large flat part with values close to
 $\xstab(\epsilon^*)$.
\\
{\em (iii) Construction of the EXIT curve and the Area Theorem:} 
Another key part of the BEC proof was to construct a family of FPs (not
necessarily stable FPs) using the special FP guaranteed by the Existence theorem. 
The  EXIT curve plus the fast transition would allow us to show that this
special FP must have an associated channel parameter, $\epsilon^*$, very close to the Shannon
threshold (for large degrees.)\footnote{For finite degrees, $\epsilon^*$ should be very
close to the MAP threshold of the $(\dl, \dr)$-regular ensemble. One should be able to prove this by formulating an appropriate Area theorem (see Section 3.20 in \cite{RiU08}).} \\
{\em Operational interpretation:} The proof would be completed by providing an
operation meaning to the  EXIT curve. Loosely speaking, the EXIT
constructed above would have a vertical drop at $\epsilon \approx
\epsilon^{\text{Sh}}(\dl, \dr)$ (cf. Figure~\ref{fig:EBPEXIT_coupled}).
 This would help to show that for any $\epsilon<\epsilon^{\text{Sh}}(\dl, \dr)$, the 
 JIT decoder will go to the trivial FP. 

\section{Conclusions}
In this paper we show that empirically coupled-codes saturate the JIT 
 threshold on the DEC. For the channel extrinsic transfer function we consider the case
when there is no precoding. 
We list below some comments and open questions.
\begin{itemize}
\item[(i)]
An obvious future direction is to complete the proof of threshold saturation.
The guidelines provided above serve as a starting point. Following this route,
in principle, it should be possible to prove the capacity achieving nature of
these codes on the DEC.  
\item[(ii)]
Another interesting question is that whether the threshold saturation phenomena
can be shown to be true for all channel extrinsic transfer functions $f(.)$ which are non-decreasing both in
$\epsilon$ and $x$ (threshold saturation holds when $f(.)$ represents precoding).
\item[(iii)]
A proof of the threshold saturation phenomena should also pave the way for the
justification of the Maxwell construction to determine
$\epsilon^{\MAPsmall}_{\DEC}(\dl, \dr)$ for the DEC. 
\item[(iv)]
Recently, it was observed that coupled MacKay-Neal (MN) codes with bounded degree exhibit the BP threshold very close to the Shannon threshold over the BEC \cite{SCMN10}. 
It is interesting to see if the coupled MN codes have the JIT threshold close to the SIR over the DEC.
\end{itemize}

\section{Acknowledgments}
SK acknowledges support of NMC via the NSF collaborative grant CCF-0829945
on ``Harnessing Statistical Physics for Computing and Communications.'' SK
would also like to thank R\"udiger Urbanke, Misha Chertkov and Henry Pfister 
for their encouragement.

\bibliographystyle{IEEEtran} 
\bibliography{lth,lthpub,kasai}

\newcommand{\SortNoop}[1]{}
\begin{thebibliography}{10}
\providecommand{\url}[1]{#1}
\csname url@rmstyle\endcsname
\providecommand{\newblock}{\relax}
\providecommand{\bibinfo}[2]{#2}
\providecommand\BIBentrySTDinterwordspacing{\spaceskip=0pt\relax}
\providecommand\BIBentryALTinterwordstretchfactor{4}
\providecommand\BIBentryALTinterwordspacing{\spaceskip=\fontdimen2\font plus
\BIBentryALTinterwordstretchfactor\fontdimen3\font minus
  \fontdimen4\font\relax}
\providecommand\BIBforeignlanguage[2]{{%
\expandafter\ifx\csname l@#1\endcsname\relax
\typeout{** WARNING: IEEEtran.bst: No hyphenation pattern has been}%
\typeout{** loaded for the language `#1'. Using the pattern for}%
\typeout{** the default language instead.}%
\else
\language=\csname l@#1\endcsname
\fi
#2}}

\bibitem{FeZ99}
A.~J. Felstr{\"{o}}m and K.~S. Zigangirov, ``Time-varying periodic
  convolutional codes with low-density parity-check matrix,'' \emph{IEEE Trans.
  Inform. Theory}, vol.~45, no.~5, pp. 2181--2190, Sept. 1999.

\bibitem{KRU10}
S.~Kudekar, T.~Richardson, and R.~Urbanke, ``Threshold saturation via spatial
  coupling: Why convolutional {LDPC} ensembles perform so well over the
  {BEC},'' 2010, e-print: http://arxiv.org/abs/1001.1826.

\bibitem{LeF10}
M.~Lentmaier and G.~P. Fettweis, ``On the thresholds of generalized {LDPC}
  convolutional codes based on protographs,'' in \emph{Proc. of the IEEE Int.
  Symposium on Inform. Theory}, Austing, TX, USA, June 2010, pp. 709--713.

\bibitem{KMRU10}
S.~{Kudekar}, C.~{M{\'e}asson}, T.~{Richardson}, and R.~{Urbanke}, ``Threshold
  saturation on {BMS} channels via spatial coupling,'' Apr. 2010, e-print:
  http://arxiv.org/abs/1004.3742.

\bibitem{KP10}
S.~Kudekar and H.~D. Pfister, ``The effect of spatial coupling on compressive
  sensing,'' in \emph{Proc. of the Allerton Conf. on Commun., Control, and
  Computing}, Monticello, IL, USA, 2010.

\bibitem{HMU10}
S.~H. Hassani, N.~Macris, and R.~Urbanke, ``Coupled graphical models and their
  thresholds,'' in \emph{Proc. of the IEEE Inform. Theory Workshop}, Dublin,
  Ireland, Sept. 2010.

\bibitem{RUAS10}
V.~Rathi, R.~Urbanke, M.~Andersson, and M.~Skoglund, ``Rate-equivocation
  optimally spatially coupled {LDPC} codes for the {BEC} wiretap channel,''
  2010, e-print: http://arxiv.org/abs/1010.1669.

\bibitem{Pfi03}
H.~D. Pfister, ``On the capacity of finite state channels and the analysis of
  convolutional accumulate-$m$ codes,'' Ph.D. dissertation, UCSD, San Diego,
  CA, USA, 2003.

\bibitem{PfS08}
H.~D. Pfister and P.~H. Siegel, ``Joint iterative decoding of {LDPC} codes for
  channels with memory and erasure noise,'' \emph{IEEE J. Sel. Area. Commun.},
  vol.~26, no.~2, pp. 320--337, Feb. 2008.

\bibitem{RiU08}
T.~Richardson and R.~Urbanke, \emph{Modern Coding Theory}.\hskip 1em plus 0.5em
  minus 0.4em\relax Cambridge University Press, 2008.

\bibitem{MMRU09}
C.~M{\'e}asson, A.~Montanari, T.~Richardson, and R.~Urbanke, ``The generalized
  area theorem and some of its consequences,'' \emph{IEEE Trans. Inform.
  Theory}, vol.~55, no.~11, pp. 4793--4821, Nov. 2009.

\bibitem{SCMN10}
K.~Kasai and K.~Sakaniwa, ``Spatially-coupled bounded-density
  capacity-achieving codes,'' in \emph{Proc. Symp. on Inf. Theory and its
  Applications}, Dec. 2010, pp. 1--6, (in Japanese).

\end{thebibliography}
\end{document}